\newcommand{\be}{\begin{eqnarray}}
\newcommand{\ee}{\end{eqnarray}}
\newtheorem{theorem}{Theorem}[section]
\newtheorem{proposition}[theorem]{Proposition}
\begin{document}

\title{{\bf Low temperature behavior of\\ nonequilibrium multilevel systems\\ }}

\author{Christian Maes$^1$, Karel~Neto\v{c}n\'{y}$^2$ and  Winny O'Kelly de Galway$^1$\\
$^1$ Instituut voor Theoretische Fysica, KU Leuven \\
$^2$ Institute of Physics AS CR, Prague, Czech Republic}

\maketitle
\begin{abstract}
We give a low temperature formula for the stationary occupations in Markovian systems away from detailed balance. Two applications are discussed, one to determine the direction of the ratchet current and one on population inversion.  Both can take advantage of low temperature to improve the gain and typical nonequilibrium features. 
The new formula brings to the foreground the importance of kinetic aspects in terms of reactivities for deciding the levels with highest occupation and thus gives a detailed quantitative meaning to Landauer's blowtorch theorem at low temperature. 
\end{abstract}
\baselineskip=20pt
\section{Introduction}
Multilevel systems and their dynamical characterization arise from different sources.  On the one hand, they describe the weak coupling limit of small quantum systems in contact with an environment consisting of large equilibrium baths or radiation.  When projecting the resulting open quantum dynamics on the energy basis, we get a jump process with rates that are determined by Fermi's Golden Rule and the evolution follows the Pauli Master equation.  The Markov limit (after the Born-Oppenheimer approximation) excludes mostly the effects of quantum coherence, but relaxation and nonequilibrium steady regimes can still be investigated \cite{glau,ls}.  On the other hand, Markov jump processes also arise as effective dynamics on local minima of a thermodynamic potential from diffusion processes along the (Arrhenius-Eyring-)Kramers reaction rate theory \cite{pol,han,berg}.   The diffusion process itself derives from classical mechanics after suitable rescaling in a coarse-grained description.\\
These different derivations of Markov jump processes remain sometimes valid in the case of driven dynamics, in the presence of nonequilibrium forces or when coupled to different reservoirs.  In fact also conceptually they often continue to play a major role in explorations of nonequilibrium physics, and they are used as models of nonequilibrium systems, even to illustrate fundamental points in a nonequilibrium theory e.g. of fluctuation and response.  By the exponential form of the rates, they are strongly nonlinear in potential differences and external forces and that is generally felt as an important point in the study of far-from-equilibrium systems.  More than diffusion processes they also play in modeling of chemical reactions and aspects of life processes.\\

Very efficient computational tools exist for the solution of the Master equation and over the last decades a major part of their study has been numerical or via simulation methods. 
There is however no systematic physical understanding of the solution of the Master equation away from detailed balance. In the present paper we concentrate on the analytical theory of the low temperature asymptotics in that steady nonequilibrium regime.  We assume indeed that the physical system under consideration is in contact with an environment at uniform temperature while also subject to external forcing that will break the Boltzmann occupation statistics.  Not only does one expect that nonequilibrium features can become more prominent at low temperatures, but also, somewhat in analogy with equilibrium statistical mechanics, that a more explicit asymptotic analysis is possible there as perturbation of the zero temperature phase diagram.\\
The present paper is a continuation of that program started in \cite{heatb,lowt}.  The main formula in Section \ref{Mot} gives a low temperature expression for the occupation of a general multilevel system.  The difference with \cite{heatb} is that we include now the role of the reactivity or Arrhenius prefactors in the pre-asymptotic form.  We give two applications where these reactivities matter which are discussed in Sections \ref{rat} and \ref{laser}.  One is (a continuous time version of)
Parrondo's game at low temperature, \cite{par}.  That is a (flashing potential) ratchet which is studied here for determining the direction of the ratchet current, a general problem which cannot be decided by entropic considerations only.  A second application concerns the role of equalizers in laser working.  These equalizers contribute directly to the reactivities and are responsible for the necessary population inversion in laser working, In particular we search for the influence of the equalizer in deciding the most occupied energy level, and thus influencing the gain.\\

\section{Low temperature formula}\label{Mot}
Low temperature analysis of nonequilibria requires finding the analogue to the so called ground states for statistical mechanical systems at equilibrium. The analysis in \cite{heatb} did not distinguish between such dominant states and a finite (low) temperature formula for the level occupations is needed.  In this section we propose the required extension of the Kirchhoff-Freidlin-Wentzel formula considered in \cite{heatb}.  The correction takes into account some of the kinetics of the model and will therefore also enable the study of (finite temperature) ``noise'' effects that contribute to the nonequilibrium phenomenology.\\

The multilevel system is determined by a finite number of states $x,y,\ldots \in K$, sometimes called energy levels as they often correspond to minima of some potential function or to the energy spectrum $E(x), E(y),\ldots$ of a finite quantum system. (Ignoring possible degeneracies, we do not distinguish here between states and levels.)\\ 
We consider an irreducible continuous time Markov jump process on $K$ with transition rates $\lambda(x,y) \equiv\lambda(x,y;\beta)$ for the jump $x\rightarrow y$ that depend on a parameter $\beta\geq 0$, interpreted as the inverse temperature. In the absence of other reservoirs or of nonequilibrium forces the dynamics satisfies the condition of detailed balance $\lambda(x,y)/\lambda(y,x) = \exp -\beta [E(y)-E(x)]$ for allowed transitions, and the stationary occupation is then $\rho_{\mbox{eq}}(x) \propto \exp-\beta E(x)$ giving the Boltzmann equilibrium statistics.  Nonequilibrium conditions still preserve the dynamical reversibility $\lambda(x,y)\neq 0 \implies \lambda(y,x)\neq 0$ but  detailed balance is broken. We are looking then for a low temperature characterization of the nonequilibrium occupation.\\
After   \cite{heatb} we assume the existence of the limit
\be\label{eq:asymp}
\lim_{\beta\rightarrow \infty} \frac{1}{\beta} \log\lambda(x,y) =:  \phi(x,y)
\ee
which is abbreviated as $\lambda(x,y)\asymp e^{\beta \phi(x,y)}$.  Similarly we define the log-asymptotic life time $\tau(x)\asymp e^{\beta \Gamma(x)}$ from the escape rates $\sum_y \lambda(x,y)$,
\[
\Gamma(x):= -\lim_{\beta \rightarrow \infty} \frac{1}{\beta}\log \sum_y \lambda(x,y)= -\max_y \phi(x,y) 
\]
The log-asymptotic transition probability is then $e^{-\beta U(x,y)}$ with 
\[
U(x,y):= -\Gamma(x) -\phi(x,y)
\]
The larger $U(x,y)$ the more improbable the transition from $x$ to $y$ becomes. Clearly, $U(x,y)\geq 0$ and for all $x$ there is at least one state $y\neq x$ for which $U(x,y)=0$. All $y$ for which $U(x,y)=0$ are called preferred successors of $x$. That defines the digraph $K^D$ with $K$ as vertex set and  having directed arcs (oriented edges) indicating all preferred successors.\\ 
We now define the reactivities $a(x,y)$, which will often be effective prefactors in a formula for reaction rates:
\be\label{rates}
\lambda(x,y)= a(x,y) e^{-\beta[\Gamma(x)+U(x,y)]},\quad a(x,y):=\lambda(x,y)\,e^{-\beta \phi(x,y)}=e^{o(\beta)}
\ee
as the sub-exponential part of $\lambda(x,y)$.\\
The new low temperature formula is for the stationary distribution $\rho(x), x\in K$, solution of the stationary Master equation
$\sum_y[\lambda(x,y)\rho(x) - \lambda(y,x)\rho(y)]=0, y\in K$.   We derive it from the Kirchhoff formula on the graph $\mathcal{G}$ which has $K$ as vertex set and with edges between any $x,y\in K$ where $\lambda(x,y)\neq 0 $ ($\iff \lambda(y,x)\neq 0$) independent of $\beta$.  In what follows, spanning trees ${\mathcal T}$ refer to that graph $\mathcal{G}$, and $\mathcal{T}_x$ denotes the in-tree to $x$ defined for any tree $\mathcal{T}$  by orienting every edge in $\mathcal{T}$ towards $x$.

\begin{theorem}\label{thm1}
There is $\varepsilon >0$ so that as
$\beta\rightarrow \infty$,
\be\label{eq:as2}
\rho(x)=\frac{1}{\mathcal{Z}} A(x)e^{\beta[\Gamma(x)-\Theta(x)]}\,(1+O(e^{-\beta\varepsilon}))
\ee
with 
\begin{eqnarray}
\Theta(x) &:=& \min_{\mathcal{T}}U(\mathcal{T}_x) \quad\mbox{ for } \quad
U(\mathcal{T}_x) := \sum_{(y,y')\in \mathcal{T}_x} U(y,y') \quad\mbox{ and }\label{thth}\\
A(x)& :=& \sum_{\mathcal{T}\in M(x)}\prod_{(y,y')\in \mathcal{T}_x} a(y,y')=e^{o(\beta)}\label{aa}
\end{eqnarray} where the last sum runs over all spanning trees minimizing $U(\mathcal{T}_x)$ (i.e., $\mathcal{T}\in M(x)$ if $\Theta(x) = U(\mathcal{T}_x)$). 
\end{theorem}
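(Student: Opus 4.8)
The plan is to start from the exact Kirchhoff (Markov-chain-tree) representation of the stationary measure and then feed in the log-asymptotics of the rates. For the irreducible jump process the unnormalized stationary weight of a state is $w(x)=\sum_{\mathcal{T}}\prod_{(y,y')\in\mathcal{T}_x}\lambda(y,y')$, the sum running over all spanning trees $\mathcal{T}$ of $\mathcal{G}$ with edges oriented toward $x$, and $\rho(x)=w(x)/\sum_{x'}w(x')$. Substituting $\lambda(y,y')=a(y,y')\,e^{-\beta[\Gamma(y)+U(y,y')]}$ from \refeq{rates}, each tree contributes $\big(\prod_{(y,y')\in\mathcal{T}_x}a(y,y')\big)\,\exp\{-\beta\sum_{(y,y')\in\mathcal{T}_x}[\Gamma(y)+U(y,y')]\}$.

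The first decisive observation is a purely combinatorial one about in-trees. In any tree oriented toward $x$, every vertex $y\neq x$ is the tail of exactly one arc and $x$ is the tail of none; hence the set of arc-tails is exactly $K\setminus\{x\}$, independently of which tree is chosen. Therefore $\sum_{(y,y')\in\mathcal{T}_x}\Gamma(y)=\sum_{y\in K}\Gamma(y)-\Gamma(x)$ does not depend on $\mathcal{T}$, and I can pull the common factor $\exp\{-\beta(\sum_y\Gamma(y)-\Gamma(x))\}$ out of the tree sum. This isolates the expected $e^{\beta\Gamma(x)}$ together with a global constant $e^{-\beta\sum_y\Gamma(y)}$ that will cancel in the normalization, leaving $w(x)=e^{\beta\Gamma(x)}e^{-\beta\sum_y\Gamma(y)}\sum_{\mathcal{T}}\big(\prod a\big)e^{-\beta U(\mathcal{T}_x)}$ with $U(\mathcal{T}_x)$ as in \refeq{thth}.

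Next I would perform a Laplace-type asymptotic on the remaining tree sum. Factoring out $\Theta(x)=\min_{\mathcal{T}}U(\mathcal{T}_x)$ gives $\sum_{\mathcal{T}}(\prod a)e^{-\beta U(\mathcal{T}_x)}=e^{-\beta\Theta(x)}\big[A(x)+\sum_{\mathcal{T}\notin M(x)}(\prod a)\,e^{-\beta(U(\mathcal{T}_x)-\Theta(x))}\big]$, with $A(x)$ the sum over the minimizers as in \refeq{aa}. Each non-minimizing tree carries a gap $U(\mathcal{T}_x)-\Theta(x)\geq\delta$, where $\delta>0$ is the smallest such positive gap over the finitely many trees and over all roots. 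Recombining with $e^{\beta\Gamma(x)}$ and $e^{-\beta\sum_y\Gamma(y)}$ yields $w(x)=A(x)\,e^{\beta[\Gamma(x)-\Theta(x)]}e^{-\beta\sum_y\Gamma(y)}(1+O(e^{-\beta\varepsilon}))$; dividing by $\sum_{x'}w(x')$ cancels $e^{-\beta\sum_y\Gamma(y)}$ and produces \refeq{eq:as2} with $\mathcal{Z}=\sum_{x'}A(x')e^{\beta[\Gamma(x')-\Theta(x')]}$.

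The step that needs genuine care — and where I expect the only real work — is controlling the subexponential reactivity prefactors so that the error is honestly $O(e^{-\beta\varepsilon})$ and not merely $e^{o(\beta)}$. Because $a(y,y')=e^{o(\beta)}$ only means $\beta^{-1}\log a(y,y')\to0$, for every $\eta>0$ one has $e^{-\beta\eta}\le a(y,y')\le e^{\beta\eta}$ for $\beta$ large. Bounding the products over the $|K|-1$ edges accordingly, the off-minimal sum is at most $C\,e^{\beta\eta(|K|-1)}e^{-\beta\delta}$ while $A(x)\ge e^{-\beta\eta(|K|-1)}$, so their ratio is $\le C\,e^{-\beta(\delta-2\eta(|K|-1))}$. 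Choosing $\eta$ small enough that $\delta-2\eta(|K|-1)>0$ and setting, say, $\varepsilon=\delta/2$ makes the relative error uniformly exponentially small; the same choice controls the ratio of the $(1+O(e^{-\beta\varepsilon}))$ factors in numerator and denominator after normalization. Positivity $A(x)>0$ is automatic from irreducibility, since at least one minimizing in-tree exists and all $a>0$ on allowed arcs.
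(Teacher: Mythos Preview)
Your proof is correct and follows exactly the same route as the paper's: Kirchhoff formula, substitute $\lambda=a\,e^{-\beta(\Gamma+U)}$, use the combinatorial fact that the arc-tails in $\mathcal{T}_x$ are precisely $K\setminus\{x\}$ to pull out $e^{\beta\Gamma(x)-\beta\sum_y\Gamma(y)}$, then split the tree sum into minimizers $M(x)$ and the rest, and normalize. The paper simply sets $\varepsilon=\min_x\min_{\mathcal{T}\notin M(x)}[U(\mathcal{T}_x)-\Theta(x)]$ and declares the remainder $O(e^{-\beta\varepsilon})$ without discussing the subexponential $a$'s; your extra paragraph handling the $e^{o(\beta)}$ prefactors via the $\eta$-argument is a genuine (if minor) improvement in rigor over what the paper writes, and in particular it shows that one may need to take $\varepsilon$ strictly smaller than the raw gap $\delta$.
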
 
Before we give the proof (at the end of this Section) there are a number of remarks and illustrations.\\
The various terms in formula \eqref{eq:as2} have a physical interpretation in terms of life time $\Gamma$ and accessibility $\Theta$ of states, cf. \cite{lowt} for an analysis of the boundary driven Kawasaki process.
To be more specific rewrite the rates in the Arrhenius form $\lambda(x,y) = a(x,y) \exp \beta[-\Delta(x,y)+E(x)]$ with $a(x,y)=a(y,x)$ the symmetric prefactor which has no exponential dependence on temperature.  The energy $E(x)$ and barrier height $\Delta(x,y)$ refer to a thermodynamic potential landscape.  Of course that picture fails to a great extent when the system undergoes nonequilibrium driving and we should then think of the antisymmetric part in the barrier height $\Delta(x,y)- \Delta(y,x)$ as the work of  non-conservative forces; see Fig.~\ref{landscape}. 

\begin{figure}[h]
\centering
		\includegraphics[scale=1]{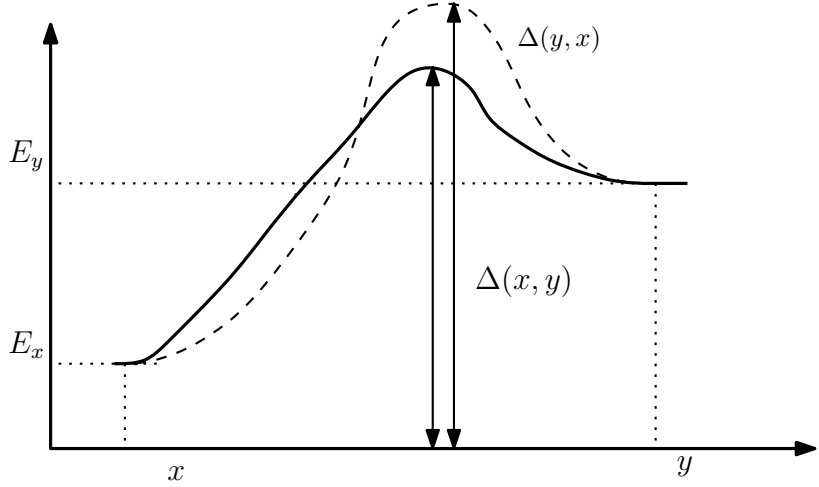}
\caption{Climbing the barrier from $x$ need not be the same as climbing it from $y$, as encoded in the work $\Delta(x,y)-\Delta(y,x)$.}
\label{landscape}
\end{figure}

 The local detailed balance form of the rates is
\begin{eqnarray}\label{ldb}
\lambda(x,y) &=& \psi(x,y)\,e^{s(x,y)/2}\\
a(x,y): = \psi(x,y)\,e^{-\frac{\beta}{2}[E(x)+E(y) - \Delta(x,y)-\Delta(y,x)]},&& s(x,y): = \beta[E(x)-E(y) + \Delta(y,x)-\Delta(x,y)]\nonumber
\end{eqnarray}
where $s(x,y) = -s(y,x)$ is the entropy flux to the environment at inverse temperature $\beta$ during $x\rightarrow y$. We see that the $a(x,y)$ truly pick up kinetic (non-thermodynamic) aspects of the transition rates.
For \eqref{eq:asymp} we have $\phi(x,y) = E(x) - \Delta(x,y)$.
  In the case
of detailed balance, $\Delta(x,y)=\Delta(y,x)$ which implies that $\Theta(x)$ is a constant and $M(x)$ also does not depend on $x$ making $A(x)$ constant.  Nonequilibrium is responsible for having  a state dependent set of minimizing trees. In other words it is exactly for nonequilibrium that the reactivities start to matter in the occupation statistics.  That is a version of Landauer's blowtorch theorem, \cite{lan}, \cite{heatb}.  We illustrate it with a simple model.

\textit{Example}. Let us consider a nearest neighbor random walk on a ring $K = \{1,2,\hdots, N\}, N>2,$ where $N+1\equiv 1$. The transition rates are  
\be\label{ext}
\lambda(x,x+1)=a_xe^{\frac{\beta}{2}q},\quad \lambda(x+1,x)=a_xe^{-\frac{\beta}{2}q},\quad q>0
\ee
where $0<a_x = e^{o(\beta)}$. The digraph  is strongly connected; see Fig.~\ref{digraph}.

\begin{figure}[h] 
\centerline{
\begin{tikzpicture}
\node [draw, shape= circle,label=320:$1$] (E1) at (270:2) {$$};
\node  [draw, shape= circle,label=320:$2$ ](E2) at (315 :2) {$$};
\node  [draw, shape= circle,label=0:$3$ ](E3) at (0:2) {$$};
\node [draw, shape= circle,label=225:$N$ ]  (EN) at (225:2) {$$};
\node [draw, shape= circle,label=180:$N-1$  ] (EN1) at (180:2) {$$};
\node [draw, shape= circle,label=135:$N-2$  ] (EN2) at (135:2) {$$};
\node  [draw, shape= circle ](DOT1)  at (90:2) {};
\node  [draw, shape= circle ](DOT2)  at (45:2) {};
\foreach \from/\to in {E1/E2, E2/E3, EN/E1, EN2/EN1, EN1/EN}
\draw [-> , thick] (\from) -- (\to);
\draw[->, dotted,thick] (DOT2) -- (DOT1);
\draw[->, dotted,thick] (E3) -- (DOT2);
\draw[->, dotted,thick] (DOT1) -- (EN2);
\end{tikzpicture}
}
\caption{Digraph for the dynamics with rates \eqref{ext}.  For any pair $\{x,y\}$ there is a connection $x\longrightarrow y$.}
\label{digraph}
\end{figure}
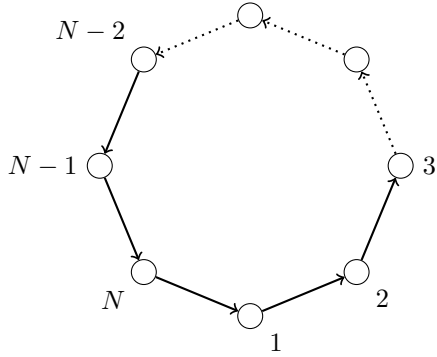

From \eqref{thth} that implies that $\Theta(x)=0$ for all states.  Furthermore $\Gamma(x) = -q $, so that \eqref{eq:as2} gives
\be
\frac{\rho(x)}{\rho(y)}=\frac{a_y}{a_x}+O(e^{-\beta \epsilon})
\ee
and the most occupied states are naturally those that have lowest reactivities in the for thermodynamic reasons fastest direction (here: $x\rightarrow x+1$) at low temperatures.  In the following two sections follow more  examples  where the subasymptotic part $A(x)$ in \eqref{eq:as2} \textit{really} matters for nonequilibrium analysis.\\

A final remark is that \eqref{aa} can be written as the determinant of an $x-$dependent matrix determined by putting weights $a(y,y')$ on the edges $(y,y')$ of the digraph $K^D$, which is a consequence of the matrix-tree theorem \cite{Tutte}. The example in Section \ref{rat} provides an illustration.


\begin{proof}[Proof of Theorem \ref{thm1}]
To find the stationary distribution we apply the Kirchhoff formula on the graph $\mathcal{G}$:
\be
\rho(x)=\frac{W(x)}{\sum_yW(y)}, \quad  W(x):=\sum_\mathcal{T}w(\mathcal{T}_x)
\ee
in which the weight $w(\mathcal{T}_x)$ is
 the product of transition rates $\lambda(y,z)$ over all oriented edges $(y,z)$ in the in-tree $\mathcal{T}_x$:
  \be
w(\mathcal{T}_x):=\prod_{(y,z)\in \mathcal{T}_x}\lambda(y,z)
\ee We substitute \eqref{rates}:
\be\nonumber
 W(x)&=& \sum_{\mathcal{T}}\prod_{(y,z)\in \mathcal{T}_x} e^{-\beta(\Gamma(y)+U(y,z))} a(y,z)\\ \nonumber
&=& \sum_{\mathcal{T}} \left(\prod_{y\neq x} e^{-\beta\Gamma(y)}\right)\prod_{(y,z)\in \mathcal{T}_x} e^{-\beta U(y,z)} a(y,z)\\ \nonumber
&=& e^{\beta(\Gamma(x)-\Theta(x)) -\beta\sum_y \Gamma(y)}\sum_{\mathcal{T}} e^{-\beta(U(\mathcal{T}_x)-\Theta(x))} \prod_{(y,z)\in \mathcal{T}_x}a(y,z),
\ee
where $\Theta(x)=\min_{\mathcal{T}_x} U(\mathcal{T}_x)$. We split up the sum over spanning trees $\cal T$ into two parts by collecting first the trees in $M(x)$, the set of trees minimizing $U(\mathcal{T}_x)$, and summing over the rest. We then continue
\be\nonumber
W(x)&=&e^{\beta(\Gamma(x)-\Theta(x)) -\beta\sum_y \Gamma(y)}  \left(\sum_{\substack{\mathcal{T},\\ \mathcal{T}\in M(x)}} \prod_{(y,z)\in \mathcal{T}_x}a(y,z)+ \sum_{\substack{\mathcal{T},\\ \mathcal{T}\notin M(x)}}e^{-\beta(U(\mathcal{T}_x)-\Theta(x))}\prod_{(y,z)\in \mathcal{T}_x}a(y,z)\right)\\
&=&e^{\beta(\Gamma(x)-\Theta(x)) -\beta\sum_y \Gamma(y)}\sum_{\mathcal{T}\in M(x)}\prod_{(y,z)\in \mathcal{T}_x}a(y,z)  \left(1 +O(e^{-\beta \epsilon})\right)
\ee
where $\varepsilon :=\min_x\min_{\mathcal{T}, \mathcal{T}\notin M(x)}\left[ U(\mathcal{T}_x)-\Theta(x)\right]>0$. The normalisation is 
\be\nonumber
\sum_y W(y)&=&e^{-\beta\sum_z \Gamma(z)}\sum_y e^{\beta(\Gamma(y)-\Theta(y))} A(y)\left(1 +O(e^{-\beta \varepsilon})\right)
\ee
which ends the proof.
\end{proof}

\section{Low temperature ratchet current}\label{rat}

We consider here a version of the well-known Parrondo game in continuous time and for random flipping between a flat potential (free random walker) and a nontrivial energy landscape, \cite{par,conpar}.  We show how the formula \eqref{eq:as2} produces an expression for the low temperature ratchet current.  In particular its direction is not determined by entropic considerations (only) but involves the reactivities.\\


The ratchet consists of two rings with the same number $N>2$ of states.  States are therefore denoted by $x= (i,n)$ where $i \in\{ 1=N+1,2,\ldots,N\}$ and $n=0,1$.

\begin{figure}[h]
\centerline{
\begin{tikzpicture}
\node [draw, shape= circle,label=90:$1$ ](x1) at (270:2)  {$$};
\node [draw, shape= circle,label=135:$2$] (x2) at (315 :2) {$$};
\node  [draw, shape= circle,label=180:$3$ ](x3) at (0:2) {$$};
\node [draw, shape= circle ,label=45:$N$ ] (xN) at (225:2) {$$};
\node  [draw, shape= circle,label=0:$N-1$ ](xN1) at (180:2) {$$};
\node  [draw, shape= circle,label=315:$N-2$ ](xN2) at (135:2) {$$};
\node  [draw, shape= circle ](xDOT1)  at (90:2) {};
\node [draw, shape= circle ] (xDOT2)  at (45:2) {};
\node [draw, shape= circle,label=320:$E_1$] (E1) at (270:4) {$$};
\node  [draw, shape= circle,label=320:$E_2$ ](E2) at (315 :4) {$$};
\node  [draw, shape= circle,label=0:$E_3$ ](E3) at (0:4) {$$};
\node [draw, shape= circle,label=225:$E_N$ ]  (EN) at (225:4) {$$};
\node [draw, shape= circle,label=180:$E_{N-1}$  ] (EN1) at (180:4) {$$};
\node [draw, shape= circle,label=135:$E_{N-2}$  ] (EN2) at (135:4) {$$};
\node  [draw, shape= circle ](DOT1)  at (90:4) {};
\node  [draw, shape= circle ](DOT2)  at (45:4) {};
\foreach \from/\to in {E1/E2, E2/E3, E1/EN, EN2/EN1}
\draw [<- , thick] (\from) -- (\to);
\draw[dotted,-> , thick] (EN2)-- (DOT1);
\draw[dotted,-> , thick] (DOT2)--(E3) ;
\draw[dotted,->, thick ] (DOT1)--(DOT2)  ;
\draw[dotted,->, thick ] (xDOT1)--(DOT1)  ;
\draw[dotted,->, thick ] (xDOT2)--(DOT2)  ;
\draw[dotted, thick, <-> ] (xDOT1) -- (xN2);
\draw[dotted, thick, <-> ] (xDOT1) -- (xDOT2);
\draw[dotted, thick, <-> ] (xDOT2) -- (x3);
\draw[ thick, <-> ] (x1) -- (E1);
\foreach \from/\to in {x2/E2,x3/E3,xN/EN,xN1/EN1,xN2/EN2}
\draw[->, thick] (\from) -- (\to);
\foreach \from/\to in {x1/x2,x2/x3,xN2/xN1,xN/xN1,x1/xN}
\draw[<->, thick] (\from) -- (\to);
\end{tikzpicture}
}
\caption{Digraph of preferred transitions of the ratchet system for the rates (\ref{rate1}) on the outer and inner ring and where $a=1$.}
\label{digraphratchet}
\end{figure}
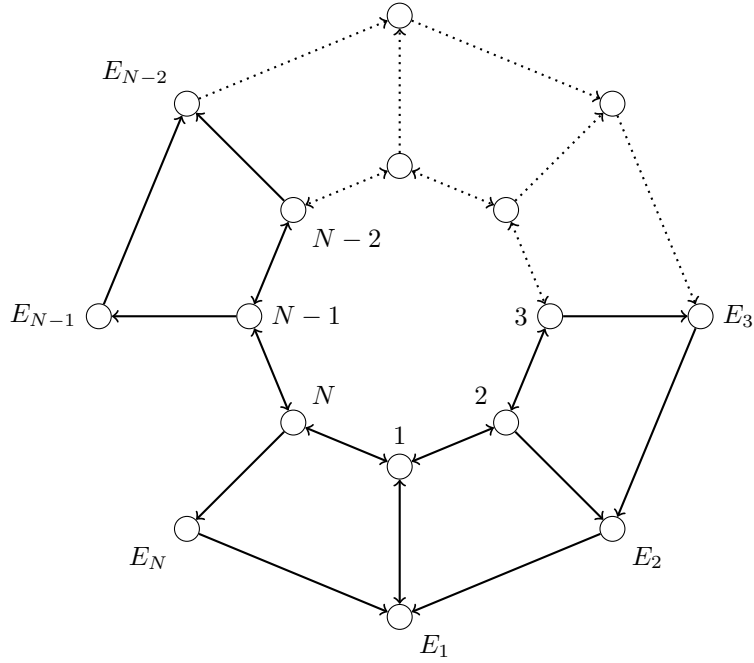
	
 We choose energies $E_i$ with order $E_1<\ldots < E_N$.  The transition rates on the outer ring ($n=0$) are
\be\label{rate1}
\lambda((i,0),(i+1,0))=e^{\frac{\beta}{2} (E_i-E_{i+1})}, \quad \lambda((i+1,0),(i,0))=e^{\frac{\beta}{2} (E_{i+1}-E_{i})}
\ee
The rates on the inner ring ($n=1$) are  constant,
\be\label{rate2}
\lambda((i,1),(i+1,1))=\lambda((i+1,1),(i,1))=1
\ee
The rings are connected with transition rates $\lambda((i,n),(i,1-n))=a$ for some $a>0$. (Separately, i.e., for $a=0$, the dynamics on both rings satisfy detailed balance, but with respect to another potential.) At low temperatures the transitions $(i,0)\rightarrow (i+1,0)$ become exponentially damped. \\ 

In the case when $a\ll1$, the rings get uncoupled and global detailed balance is restored for $a=0$. On the other hand, when the coupling $a\gg 1$ becomes very strong, the model is effectively running on a single ring with transition rates being the sum
\[
\lambda_{a=\infty}(i,i\pm 1) = e^{\frac{\beta}{2} (E_i-E_{i\pm 1})} + 1,\quad  \frac{\lambda_{\infty}(i,i+ 1)}{\lambda_{\infty}(i+ 1,i)} =  e^{\frac{\beta}{2} (E_i-E_{i+1})}
\] 
and thus again satisfying detailed balance but for inverse temperature $\beta/2$.\\ 
An interesting nonequilibrium situation arises when $a=1$, see Fig.~\ref{digraphratchet} for the graph of preferred transitions in this case.  We apply formula (\ref{eq:as2}) for the stationary distribution $\rho$. We have here $a(x,y)=1$ over all edges. Therefore in \eqref{aa}, the prefactor $A(x) = |M(x)|$  equals the number of in-trees for which $U(\mathcal{T}_x)$ is minimal. On the other hand, for the exponential factor in (\ref{eq:as2}), $\Gamma(x)-\Theta(x)$ becomes maximal and equal to zero for $x\in {\cal D} := \{(1,0),(i,1),i=1,\ldots,N\}$;  ($\Gamma(x)=0  =\Theta(x)$ for $x\in {\cal D}$).  As a consequence, at low temperatures, $\rho(x) \propto |M(x)|$ for $x\in {\cal D}$, and $\rho(y) \simeq |M(y)|\,e^{\beta \Gamma(y)}/{\cal Z}, \Gamma(y)<0$, is exponentially damped for $y\notin \cal D$. When the energy levels are equidistantly spaced in $[0,1]$, then $\varepsilon\propto 1/N$ so that $\beta\gg N$ is required for Theorem \ref{thm1} to apply.\\
To discover the most occupied state thus amounts to finding $x\in {\cal D}$ with the largest $|M(x)|$, the number of in-trees for a given state $x$ in the digraph. That number is given by the Matrix-Tree Theorem; see e.g. \cite{Tutte}.
We need the Laplacian matrix $L$ on the digraph $K^D$ and we erase the row and the  column corresponding to vertex $x$ to obtain the matrix $L_x$.  Then,
\be
A(x)= |M(x)| = \det L_x
\ee
The Laplacian of the digraph $K^D$ has a rather simple structure:
\be\nonumber
L = \bordermatrix{ 
 ~      &  (1,0)       & (2,0) & \hdots  &\hdots&(N,0)&(1,1)&  (2,1) &\hdots	& \hdots&(N,1) \\	\nonumber
(1,0)   &   1          &       &         &      &     &-1   &        &    	  &       &       \\	\nonumber
(2,0)   &  -1  &  1    &        &       &       &    &       &		  &    	  &	   \\   \nonumber
\vdots&      & \ddots& \ddots &       &       &    &       &		  & 	  &    \\   \nonumber
\vdots&      &       & -1     &1      &       &    &       &		  & 	  &    \\   \nonumber
(N,0)   &   -1  &       &        &    0  & 1     &    &       &		  & 	  &    \\\hline \nonumber
(1,1)   &   -1 &       &        &       &       & 3  & -1    &		  & 	  &   -1 \\	\nonumber
(2,1)   &      & -1    &        &       &       & -1 & 3     & \ddots   &	      &    \\   \nonumber 
\vdots&      &       & \ddots &       &       &    & \ddots& \ddots   & \ddots&    \\   \nonumber 
\vdots&      &       &        & \ddots&       &    &       & \ddots   & \ddots& -1 \\   \nonumber 
(N,1)   &      &       &        &       &-1     & -1   &       &          & -1    & 3\\}    \nonumber
\ee
The state for which the number of in-trees becomes maximal is $(1,0)$; it is easy to see that there are more combinations to form an in-tree to $(1,0)$ than to any other state $(i,1)$ on the inner ring.\\
The stationary ratchet current $J_R$ in the clockwise direction is the current over both rings, that is
\be
J_{R}=j((i+1,0),(i,0))+ j((i+1,1),(i,1))
\ee
with $j(x,y) = \lambda(x,y)\rho(x) -\lambda(y,x)\rho(y)$.  Of course the ratchet current also depends on  $N$ (ring size) and on the energy landscape.   Interestingly, the direction of the ratchet current is not decided by the second law, in contrast with the usual case for currents produced by thermodynamic forces.  For example, consider the following two trajectories to go around the ring: $\omega_1 = ((N,0),(N-1,0),\ldots,(1,0), (1,1),(N,1),(N,0))$ and $\omega_2 = ((N,0),(1,0),(1,1),(2,1),\ldots,(N,1),(N,0))$.  They wind in opposite direction while their entropy flux computed from \eqref{ldb} is exactly equal to $s(\omega_1)=s(\omega_2) = \beta(E_N - E_1) > 0$; Fig.~\ref{twotraj}. 

%
\begin{figure}[h]
\centerline{
\begin{tikzpicture}
\node [draw, shape= circle,label=90:$1$ ](x1) at (270:2)  {$$};
\node [draw, shape= circle,label=135:$2$] (x2) at (315 :2) {$$};
\node  [draw, shape= circle,label=180:$3$ ](x3) at (0:2) {$$};
\node [draw, shape= circle ,label=45:$N$ ] (xN) at (225:2) {$$};
\node  [draw, shape= circle,label=0:$N-1$ ](xN1) at (180:2) {$$};
\node  [draw, shape= circle,label=315:$N-2$ ](xN2) at (135:2) {$$};
\node  [draw, shape= circle ](xDOT1)  at (90:2) {};
\node [draw, shape= circle ] (xDOT2)  at (45:2) {};
\node [draw, shape= circle,label=320:$E_1$] (E1) at (270:4) {$$};
\node  [draw, shape= circle,label=320:$E_2$ ](E2) at (315 :4) {$$};
\node  [draw, shape= circle,label=0:$E_3$ ](E3) at (0:4) {$$};
\node [draw, shape= circle,label=225:$E_N$ ]  (EN) at (225:4) {$$};
\node [draw, shape= circle,label=180:$E_{N-1}$  ] (EN1) at (180:4) {$$};
\node [draw, shape= circle,label=135:$E_{N-2}$  ] (EN2) at (135:4) {$$};
\node  [draw, shape= circle ](DOT1)  at (90:4) {};
\node  [draw, shape= circle ](DOT2)  at (45:4) {};
\foreach \from/\to in {E2/E1, E3/E2,EN1/EN2}	
\draw [blue,-> , thick] (\from) -- (\to);
\draw[blue,dotted,-> , thick] (EN2)-- (DOT1);
\draw[blue,dotted,-> , thick] (DOT2)--(E3) ;
\draw[blue,dotted,->, thick ] (DOT1)--(DOT2)  ;
\draw[dotted,-, thick ] (xDOT1)--(DOT1)  ;
\draw[dotted,-, thick ] (xDOT2)--(DOT2)  ;
\draw[red,dotted, thick, -> ] (xDOT1) -- (xN2);
\draw[red,dotted, thick, -> ] (xDOT2) -- (xDOT1);
\draw[red,dotted, thick, -> ] (x3) -- (xDOT2);
\draw[red, thick, -> ] (EN) -- (E1);
\draw[blue, thick, -> ] (EN) -- (EN1);
\draw[thick,blue,-> ] (x1) -- (xN);
\foreach \from/\to in {x2/E2,x3/E3,xN1/EN1,xN2/EN2}
\draw[-, thick] (\from) -- (\to);
\foreach \from/\to in {x1/x2,x2/x3,xN2/xN1,xN1/xN}
\draw[->, thick,red] (\from) -- (\to);
  \path[blue,thick,->,every node/.style={font=\sffamily\small}]
(E1) edge [bend right] (x1);
  \path[red,thick,->,every node/.style={font=\sffamily\small}]
(E1) edge [bend left] (x1);
  \path[blue,thick,->,every node/.style={font=\sffamily\small}]
(xN) edge [bend right] (EN);
  \path[red,thick,->,every node/.style={font=\sffamily\small}]
(xN) edge [bend left] (EN);
\end{tikzpicture}
}
\caption{Trajectoreis $\omega_1$ and $\omega_2$ with the same entropy flux, yet in opposite directions.}
\label{twotraj}
\end{figure}
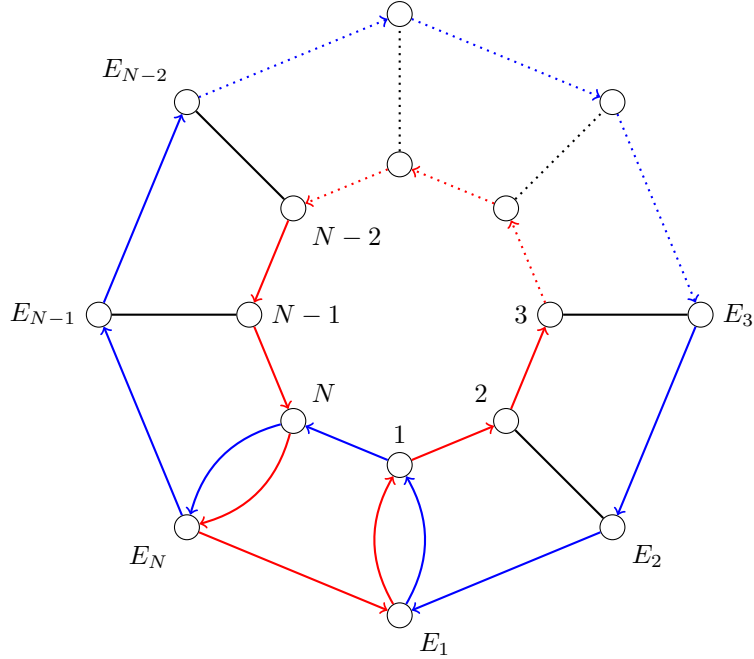

That makes the point for kinetic aspects, or here, the importance of $A(x)$ in \eqref{aa} for deciding the nature of the ratchet current.\\

Let us denote $x\simeq y$ when $x=y+O(e^{-\beta \epsilon})$. We take $x=(1,1)$ for which
 $\rho(1,1)\simeq\frac{1}{\mathcal{Z}}A((1,1))$.  Then,
\be\nonumber
j((2,1),(1,1)) \simeq \frac{1}{\mathcal{Z}}\left(A((2,1))-A((1,1))\right)
\ee 
Moreover,
\be\nonumber
j((2,0),(1,0)) \simeq \frac{A((2,0))}{\mathcal{Z}}
\ee
As  a consequence,
\be\nonumber
J_{R}\simeq\frac{1}{\mathcal{Z}}\left( \det L_{(2,1)}+\det L_{(2,0)} - \det L_{(1,1)}\right)
\ee
From the form of the Laplacian $L$, one finds that
\begin{enumerate}
\item $\det L_{(2,0)}=2\det B_{N-1} -3\det B_{N-2}-3$,
\item $\det L_{(1,1)}=\det B_{N-1}$,
\item $\det L_{(2,1)}=\det B_{N-2}+1$
\end{enumerate}
with 
\be
B_N =\left( \begin{array}{cccc}
   3  & -1    &        &              \\	
  -1  & 3    & \ddots &         \\
      & \ddots& \ddots & -1         \\
      &       & -1 & 3        \\ 
 \end{array}\right)
\ee
$B_N$ satisfies the recursion relation $\det B_N=3\det B_{N-1} - \det B_{N-2}$, where $\det B_2=8$ and $\det B_1=3$. Hence,
\be\nonumber\label{ratcur1}
J_{R}&\simeq&\frac{\det B_{N-1}-2\det B_{N-2}-2}{\mathcal{Z}}\\
&=&\frac{\det B_{N-2}-\det B_{N-3}-2}{\mathcal{Z}}
\ee
Since $\det B_2=8$ and $\det B_1=3$, we have $\det B_2>\det B_1+2$. Assume this holds for $B_N$, i.e., that $\det B_N>\det B_{N-1}+2$.  Then,
\be\nonumber 
\det B_{N+1}=3\det B_{N}-\det B_{N-1}>3\det B_{N}-\det B_{N-1}>2\det B_N-2>\det B_N-2
\ee
which proves that $J_{R}>0$, $\forall N\geq 4$ and $J_{R}=0$ when $N=3$. In other words the direction is clockwise.  The recursive relation allows to calculate $B_N$ for arbitrary $N$. The result for the ratchet current is shown in Fig. \ref{ratcur} (always up to order $O(e^{-\beta \epsilon})$). Notethat the current saturates as the system size $N$ increases. This is due to the fact that the stationary occupations also saturate, see Fig. \ref{ovn}. This is in turn a consequence of the fact that the ratio between the number of in-trees for a specific state versus the sum of in-trees over all (dominant) states saturates.
  
\newpage
\begin{figure}[h]
\centering
\hspace*{-1.3cm}\begin{subfigure}{.5\textwidth}
  \centering
  \includegraphics[width=\textwidth]{OVN}
  \caption{Some occupations as a function of the ring size $N$}
  \label{ovn}
\end{subfigure}%
\hspace*{1cm}\begin{subfigure}{.5\textwidth}
\includegraphics[width=\textwidth]{CVN}
\caption{The ratchet current $J_R$ as function of the size $N$ from \eqref{ratcur1}.}
  \label{ratcur}
\end{subfigure}
  \caption{}
\end{figure}

\section{Population inversion}\label{laser}
Lasers provide other examples of typical nonequilibrium phenomena. A laser medium consists of atoms  with the same energy spectrum. One can thus treat the system as an ensemble of particles that populate different energy levels. The ensemble is in contact with an equilibrium heat reservoir (the lattice) at inverse temperature $\beta$. The nonequilibrium condition arises from an external field that excites particles to a higher energy level. The resulting  population inversion is important for laser amplification.\\

Consider again a Markov jump process on  $\{1,2,\ldots,N\}$ with $N>2$ and define the transition rates as either purely relaxational or equalizing.  Between any neighboring levels $x \leftrightarrow y, |x-y|=1$, the relaxational dynamics follows detailed balance at inverse temperature $\beta$, and we take rates
\[
\lambda(x,y) = \psi(x,y)\,e^{-\frac{\beta}{2}(E(y)-E(x))},\quad  \{x,y\}\neq \{1,N\}
\]
for symmetric $\psi(x,y)=\psi(y,x)$ and energy levels $E(1)<E(2) <\ldots<E(N)$.
There are also transitions between the lowest and the highest level in the form of an equalizing process
\[
\lambda(1,N) =\lambda(N,1) = b  > 0
\]
with $b$ independent of temperature.

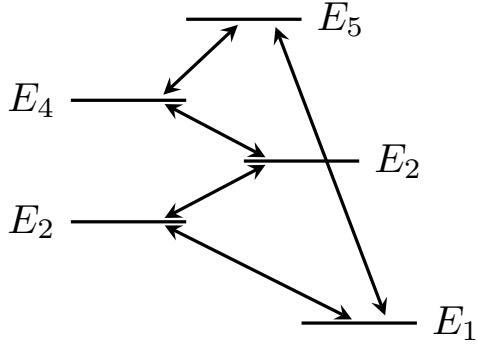
\begin{figure}[h]
\centerline{
  \resizebox{7cm}{!}{
    \begin{tikzpicture}[
      scale=0.5,
      level/.style={thick},
      virtual/.style={thick,densely dashed},
      trans/.style={thick,<->,shorten >=2pt,shorten <=2pt,>=stealth},
      classical/.style={thin,double,<->,shorten >=4pt,shorten <=4pt,>=stealth}
    ]
	  \draw[level] (2cm,4em) -- (4cm,4em) node[right] {$E_5$};
	  \draw[level] (2cm,0em) -- (0cm,0em) node[left] {$E_4$};
	  \draw[level] (3cm,-3em) -- (5cm,-3em) node[right] {$E_2$};
    \draw[level] (2cm,-6em) -- (0cm,-6em) node[left] {$E_2$};
    \draw[level] (4cm,-11em) -- (6cm,-11em) node[right] {$E_1$};
		\draw[trans] (1.5cm,0em) -- (3cm,4em);
    \draw[trans] (3.5cm,-3em) -- (1.5cm,0em);
    \draw[trans] (1.5cm,-6em) -- (3.5cm,-3em);
    \draw[trans] (5cm,-11em) -- (1.5cm,-6em);
    \draw[trans] (5.5cm,-11em)--(3.5cm,4em); 
    \end{tikzpicture}
  }
}
\caption{A 5-level diagram with the possible transitions.}
\end{figure}

The question is to specify the low temperature stationary distribution, and to understand the role of the  equalizer.  In particular we show how to organize the choice so that  a specific level $i\neq 1$ gets most populated.\\  
We take $\psi(i-1,i)=\psi(i,i-1)=a\,e^{-\beta F/2}$  for $F>0$ and for some $a=e^{o(\beta)}$. We also set $\psi(x,x+1)=e^{o(\beta)}$ for the other states $x\neq i-1$. The digraph is strongly connected so that $\Theta(x)=0$.  From \eqref{eq:as2} the stationary occupations thus become $\rho(x)\simeq \frac{1}{\mathcal{Z}} A(x)\, e^{\beta \Gamma(x)}$
 and the lifetimes appear to be  decisive. For $x=1$ we have $\Gamma(1) = 0$.  For $x=i$ it depends:  When
 $ F\geq E_{i+1} - E_{i-1}$, then $2\Gamma(i) = E_{i+1}-E_i >0$; when $E_i-E_{i-1} <  F \leq E_{i+1}-E_{i-1}$, then $2\Gamma(i) = F- E_{i}+E_{i-1} > 0$ and the level $i$ will be most occupied.  However, when $F=E_i-E_{i-1}$, then 
from formula (\ref{eq:as2}),
\be\nonumber
&&\rho(1)\simeq\frac{ \prod_{x\neq 1,i} \psi(x-1,x)}{\mathcal{Z}} \,a \\
&&\rho(i)\simeq\frac{ \prod_{x\neq 1,i} \psi(x-1,x)}{\mathcal{Z}} b \\\nonumber
&&\rho(x)\simeq 0, \quad x\neq \{1,i\}
\ee
and it depends on the ratio $a/b$ whether it is level $x=1$ or level $x=i$ that is to win the largest population.\\
 Note that $b=0$ would restore detailed balance and then no population inversion is possible. That shows again the important dependence on the reactivities for nonequilibrium statistics.

\section{Conclusion and outlook}

In contrast to thermal equilibrium, nonequilibrium steady states are determined also by kinetic aspects of the dynamics which are encoded in different reactivities of stochastic transitions between states of the system. Some more insight and remarkable simplifications can be obtained in the low temperature regime where the asymptotic behavior is related to dominant preferences within the Markovian transition rules. 

In this paper we have derived and discussed an improved asymptotic expression for the stationary occupations in nonequilibrium multilevel systems at low temperatures, explicitly including the $\beta-$subexponential corrections given by a restricted Kirchhoff tree formula. We have illustrated its usefulness by evaluating the ratchet current in a model of a
flashing potential for which subexponential corrections play a crucial role by removing the ``false'' degeneracies originating in a ``naive'' exponential asymptotic analysis. This is a typical example of a problem where a purely thermodynamic approach based on the Second Law inequality does not suffice to determine the direction of a current. 
Related is the phenomenon of population inversion controled by the variable transition reactivities (sometimes referred to as the Landauer blowtorch theorem) which we have demonstrated via our second example.\\

Although the geometric representations prove useful for the low temperature analysis of small stochastic systems, the complexity fast increases with the number of degrees of freedom. It still remains an open question whether this approach can be extended to a useful framework for spatially extended systems with local transition rules. That could open a new route, e.g., for studies on dynamical phase transitions in the low temperature domain.

\end{document}